\def\1{\bm{1}}
\DeclareMathAlphabet{\mathsfit}{\encodingdefault}{\sfdefault}{m}{sl}
\SetMathAlphabet{\mathsfit}{bold}{\encodingdefault}{\sfdefault}{bx}{n}
\newtheorem*{rep@theorem}{\rep@title}
\newcommand{\newreptheorem}[2]{%
\newenvironment{rep#1}[1]{%
 \def\rep@title{#2 \ref{##1}}%
 \begin{rep@theorem}}%
 {\end{rep@theorem}}}
\newtheorem{theorem}{Theorem}
\newtheorem{proposition}{Proposition}
\newtheorem{corollary}{Corollary}
\newtheorem{remark}{Remark}
\newcommand{\nn}{\nonumber}
\begin{document}

%

%


\title{Tighter Expected Generalization Error Bounds via Convexity of Information Measures}

\author{
        Gholamali Aminian$^*$,
        Yuheng Bu$^*$,
        Gregory Wornell,
        Miguel Rodrigues
\thanks{$^*$ Equal Contribution.}
\thanks{G. Aminian and M. Rodrigues are with the Electronic and Electrical Engineering Department at University College London, UK, (Email: g.aminian, m.rodrigues@ucl.ac.uk).}
\thanks{Y. Bu and G. Wornell are with the Department of Electrical Engineering and Computer Science, Massachusetts Institute of Technology, Cambridge, MA 02139 (Email: buyuheng, gww@mit.edu).}

}

\allowdisplaybreaks

\maketitle


\begin{abstract}

Generalization error bounds are essential to understanding machine learning algorithms. This paper presents novel expected generalization error upper bounds based on the average joint distribution between the output hypothesis and each input training sample. Multiple generalization error upper bounds based on different information measures are provided, including Wasserstein distance, total variation distance, KL divergence, and Jensen-Shannon divergence. Due to the convexity of the information measures, the proposed bounds in terms of Wasserstein distance and total variation distance are shown to be tighter than their counterparts based on individual samples in the literature. An example is provided to demonstrate the tightness of the proposed generalization error bounds.
\end{abstract}

\section{Introduction}
Machine learning algorithms are increasingly adopted to solve various problems in a wide range of applications. Understanding the generalization behavior of a learning algorithm is one of the most important challenges in statistical learning theory. Various approaches have been developed to bound the generalization  error~\citep{rodrigues2021information}, including VC dimension-based bounds~\citep{vapnik1999overview}, algorithmic stability-based bounds ~\citep{bousquet2002stability}, algorithmic robustness-based bounds ~\citep{xu2012robustness}, PAC-Bayesian bounds~\citep{mcallester2003pac}.

More recently, approaches leveraging information-theoretic tools have been developed to characterize the generalization error of a learning algorithm. Such approaches incorporate various ingredients associated with a supervised learning problem, including the data generating distribution, the hypothesis space, and the learning algorithm itself, expressing expected generalization error in terms of specific information measures between the input of training dataset and output hypothesis.

In particular, building upon pioneering work by Russo and Zou~\citep{russo2019much}, an expected generalization error upper based on the mutual information between the training set and the hypothesis is proposed by Xu and Raginsky~\citep{xu2017information}. Bu \textit{et al.}~\citep{bu2020tightening} have derived tighter generalization error bounds based on individual sample mutual information. The generalization error bounds based on other information measures such as $\alpha$-R\'eyni divergence~\citep{modak2021renyi}, maximal leakage~\citep{esposito2019generalization}, Jensen-Shannon divergence~\citep{aminian2020jensen}, Wasserstein distances~\citep{lopez2018generalization,wang2019information} and individual sample Wasserstein distance~\citep{galvez2021tighter} are also considered. Chaining mutual information technique is proposed in~\citep{asadi2018chaining} and~\citep{asadi2020chaining} to further improve the mutual information-based bound. The upper bounds based on conditional mutual information and individual sample conditional mutual information are proposed in \citep{steinke2020reasoning} and \citep{zhou2020individually}, respectively. It is shown in ~\citep{hafez2020conditioning,haghifam2020sharpened}, the combination of conditioning and processing techniques could provide tighter generalization error upper bounds. Using rate-distortion theory, \citep{masiha2021learning,bu2020information,bu2021population} provide information-theoretic generalization error upper bounds for model misspecification and model compression, respectively. An exact characterization of the generalization error for the Gibbs algorithm in terms of symmetrized KL information is provided in \citep{Our2021exact}.

In this paper, we introduce the notion of \emph{average joint distribution}, which is the average of the distribution between the output hypothesis and each training sample. We aspire to provide a more refined analysis of the generalization ability of randomized learning algorithms by representing the expected generalization error using the aforementioned average joint distribution. The merit of this representation is that it directly leads to some tighter generalization error upper bounds based on the convexity of the information measures, including Wasserstein distance and total variation distance. The proposed bound finds its application when the importance of each training sample is not the same in the learning algorithm, e.g.,  imbalanced classification or learning under noisy data samples. 

More specifically, our contributions are as follows:
\begin{itemize}
    \item We provide novel expected generalization error upper bounds based on the average joint distribution between the output hypothesis and each input training sample, in terms of Wasserstein distance, total variation distance, KL divergence, and Jensen-Shannon divergence.
    \item  We offer an upper bound on the difference between the empirical risk of two learning algorithms using the KL divergence between average joint distributions.
    \item We construct a simple numerical example to demonstrate the improvement of the proposed upper bound based on the average joint distribution in comparison to individual sample mutual information bound~\citep{bu2020tightening}.
\end{itemize}

\textbf{Notations:} 
A random variable is denoted by an upper-case letter (e.g., $Z$), its alphabet is denoted by the corresponding calligraphic letter (e.g., $\mathcal{Z}$), and the realization of the random variable is denoted with a lower-case letter (e.g., $z$). 
The probability distribution of the random variable $Z$ is denoted by $P_Z$. The joint distribution of a pair of random variables $(Z_1,Z_2)$ is denoted by $P_{Z_1,Z_2}$. 

\textbf{Information Measures:} The differential entropy of a continuous probability measure $P$ defined over space $\mathcal{Z}$ is given by $h(P)\triangleq\int_\mathcal{Z} -dP \log(dP)$.
If  $P$ and $Q$ are probability measures defined over space $\mathcal{Z}$, and $P$ is absolutely continuous with respect to $Q$, the Kullback-Leibler (KL) divergence between $P$ and $Q$ is given by
$D(P\|Q)\triangleq\int_\mathcal{Z}\log\left(\frac{dP}{dQ}\right) dP$. The Donsker-Varadhan variational representation of the KL divergence is as follows~\citep{polyanskiy2014lecture},
\begin{align}\label{eq: KL rep}
    D(P\|Q)=\sup_{g\in \mathcal{G}}\left\{\mathbb{E}_P[g(Z)]-\log(\mathbb{E}_Q[e^{g(Z)}])\right\},
\end{align}
where the supremum is over all measurable functions, i.e., $\mathcal{G}=\{g: \mathcal{Z}\to \mathbb{R}, \text{ s.t. } \mathbb{E}_Q[e^{g(Z)}]<\infty\}$. 

The Jensen-Shannon divergence \citep{lin1991divergence} is defined as 
\begin{equation}
    D_{JS}(P\|Q)\triangleq\frac{D(P\| \frac{P+Q}{2})}{2}+\frac{D(Q\| \frac{P+Q}{2})}{2}, 
\end{equation}
and it can be verified that $D_{JS}(P\|Q)\leq \log(2)$.

The mutual information between two random variables $X$ and $Y$ is defined as the KL divergence between their joint distribution and the product of the marginals, i.e., $I(X;Y)\triangleq D(P_{X,Y}\|P_X\otimes P_{Y})$. Similarly, the Lautum information introduced in  \citep{palomar2008lautum} is defined as the KL divergence between the product of the marginals and the joint distribution, i.e.,  $L(X;Y)\triangleq D(P_X\otimes P_{Y} \| P_{X,Y})$.

The Wasserstein distance between $P$ and $Q$ is defined using a metric $\rho : \mathcal{Z}\times\mathcal{Z} \to \mathbb{R}_0^+$, and it is given by:
\begin{equation}
    \mathbb{W}(P,Q)=\inf_{\pi \in \Pi(P,Q)}\int_{\mathcal{Z} \times \mathcal{Z}} \rho(z,z') d\pi(z,z'),
\end{equation}
where $\Pi(P,Q)$ is the set of all joint distributions $\pi$ over the product space $\mathcal{Z} \times \mathcal{Z}$ with marginal distributions $P$ and $Q$.
When $\mathcal{Z}$ is a normed space with norm $\|\cdot\|$, simply taking $\rho(z,z') = \| z - z' \|$ leads to
\begin{equation}
    \mathbb{W}(P,Q)\triangleq\inf_{\pi \in \Pi(P,Q)}\int_{\mathcal{Z} \times \mathcal{Z}} \| z-z' \| \ d\pi(z,z').
\end{equation}

Another representation for the Wasserstein distance is given by the Kantorovich-Rubinstein duality~\citep{villani2009optimal}, i.e.,
\begin{equation}\label{Eq: KR rep}
    \mathbb{W}(P,Q)
    =\sup_{g\in \{g:\operatorname{Lip}(g) \leq 1\}} \left\{ \mathbb{E}_{P}[g(Z)]-\mathbb{E}_{Q}[g(Z)] \right\},
\end{equation}
where $\operatorname{Lip}(g)$ denotes the Lipschitz constant of function $g : \mathcal{Z} \to \mathbb{R}$, namely
\begin{align}
    \operatorname{Lip}(g) 
     \triangleq &\inf \big\{ L>0 :  |g(z_1)-g(z_2)| \leq L \| z_1 - z_2 \|,\  z_1,z_2 \in \mathcal{Z} \big\}. \nn
\end{align}

The total variation distance between $P$ and $Q$ is given by \begin{equation}\label{Eq: tv def}
    \mathbb{TV}(P,Q)\triangleq  \frac{1}{2}\int |dP-dQ|.
\end{equation}
Note that total variation distance also arises from Wasserstein distance~\citep{villani2009optimal}, i.e., $\mathbb{TV}(P,Q) = \mathbb{W}(P,Q)$, when $\rho(z,z')=\mathbbm{1}\{z\neq z'\}$ where $\mathbbm{1}$ is an indicator function.

\section{Problem Formulation}

Let $S = \{Z_i\}_{i=1}^n$ be the training set, where each $Z_i$ is defined on the same alphabet $\mathcal{Z}$. Note that $Z_i$ is not required to be i.i.d generated from the same data-generating distribution $P_Z$, and we denote the joint distribution of all the training samples as $P_S$.
We denote the hypotheses by $w \in \mathcal{W}$, where $\mathcal{W}$ is a hypothesis class. The performance of the hypothesis is measured by a non-negative loss function $\ell:\mathcal{W} \times \mathcal{Z}  \to \mathbb{R}_0^+$, and we can define the empirical risk and the population risk associated with a given hypothesis $w$ as 
\begin{align}
    &L_E(w,s)\triangleq\frac{1}{n}\sum_{i=1}^n \ell(w,z_i), \\
    &    L_P(w,P_S)\triangleq  \mathbb{E}_{P_S}[L_E(w,S)],
\end{align}
respectively. A learning algorithm can be modeled as a randomized mapping from the training set $S$ onto an hypothesis $W\in\mathcal{W}$  according to the conditional distribution $P_{W|S}$. Thus, the expected generalization error quantifying the degree of over-fitting can be written as
\begin{equation}\label{Eq: expected GE}
\overline{\text{gen}}(P_{W|S},P_S)\triangleq\mathbb{E}_{P_{W,S}}[ L_P(W,P_S)-L_E(W,S)],
\end{equation}
where the expectation is taken over the joint distribution $P_{W,S} =  P_{W|S}\otimes P_S$.

In this paper, we construct different upper bounds for generalization error using  the average joint distribution, which is defined as 
\begin{align}\label{Eq: average joint distribution}
    \overline{P}_{W,\overline{Z}}(w,z) \triangleq \frac{1}{n}\sum_{i=1}^n P_{W,Z_i}(w,z).
\end{align}
Note that the average sample distribution is defined as
\begin{align}
   \overline{P}_{\overline{Z}}(z) \triangleq \frac{1}{n}\sum_{i=1}^n P_{Z_i}(z).
\end{align}
It is worthwhile to mention that under i.i.d assumption we have $\overline{P}_{\overline{Z}}=P_Z$.
Similarly, the average conditional distribution is defined as
\begin{equation}\label{Eq: sym learning algorithm}
    \overline{P}_{W|\overline{Z}=z}(w)=\frac{1}{n}\sum_{i=1}^n P_{W|Z_i=z}(w).
\end{equation}
A learning algorithm is said to be \textit{symmetric}, if the conditional distributions between each sample $Z_i$ and hypothesis $W$ are the same, i.e., 
$P_{W|Z_i}=P_{W|Z},\  \forall i\in\{1,\cdots,n\}$.


\section{Generalization Error Upper Bounds}
This section provides expected generalization error upper bounds in terms of different information measures, including Wasserstein distance, total variation distance, KL divergence, and Jensen-Shannon divergence. In the case of Wasserstein distance and total variation distance, our upper bounds are shown to be tighter than existing upper bounds based on these information measures.

To present our result, we first show that the expected generalization error can be expressed in terms of the average joint distribution~\eqref{Eq: average joint distribution}.
\begin{proposition}\label{Prop: new exp gen}
The expected generalization error of a learning algorithm $P_{W|S}$ can be written as
\begin{equation}\label{Eq: new  rep expected GE}
\overline{\text{gen}}(P_{W|S},P_S)=\mathbb{E}_{P_W \otimes  \overline{P}_{\overline{Z}}}[\ell(W,Z)]-\mathbb{E}_{ \overline{P}_{W,\overline{Z}}}[\ell(W,Z)].
\end{equation}
\end{proposition}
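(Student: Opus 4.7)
The plan is to unfold both terms in the definition $\overline{\text{gen}}(P_{W|S},P_S)=\mathbb{E}_{P_{W,S}}[L_P(W,P_S)-L_E(W,S)]$ via linearity of expectation, and then simply recognize the two averaged marginals $\overline{P}_{\overline{Z}}$ and $\overline{P}_{W,\overline{Z}}$ from their definitions. No inequality or convexity argument is required; the statement is a straightforward identity, and the work is entirely notational/bookkeeping.

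First I would handle the population-risk term. Since $L_P(w,P_S)=\mathbb{E}_{P_S}[L_E(w,S)]$ depends on $S$ only through the distribution $P_S$ and not through the particular realization, the expectation $\mathbb{E}_{P_{W,S}}[L_P(W,P_S)]$ collapses to $\mathbb{E}_{P_W}[L_P(W,P_S)]$. Expanding $L_E$ and pulling the sum out, this equals $\frac{1}{n}\sum_{i=1}^n \mathbb{E}_{P_W\otimes P_{Z_i}}[\ell(W,Z_i)]$, where the product form arises because the inner expectation is taken in two independent stages (first over $W$, then over a fresh $Z_i\sim P_{Z_i}$). Interchanging the sum with the integral and using the definition $\overline{P}_{\overline{Z}}(z)=\frac{1}{n}\sum_{i=1}^n P_{Z_i}(z)$, this expression coincides with $\mathbb{E}_{P_W\otimes \overline{P}_{\overline{Z}}}[\ell(W,Z)]$.

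Next I would handle the empirical-risk term. By the definition of $L_E$,
\begin{equation*}
\mathbb{E}_{P_{W,S}}[L_E(W,S)]=\frac{1}{n}\sum_{i=1}^n \mathbb{E}_{P_{W,Z_i}}[\ell(W,Z_i)],
\end{equation*}
where $P_{W,Z_i}$ is the marginal of the joint $P_{W,S}$ onto the $(W,Z_i)$ coordinates. Exchanging sum and integral once more and invoking the definition $\overline{P}_{W,\overline{Z}}(w,z)=\frac{1}{n}\sum_{i=1}^n P_{W,Z_i}(w,z)$ from \eqref{Eq: average joint distribution}, this equals $\mathbb{E}_{\overline{P}_{W,\overline{Z}}}[\ell(W,Z)]$. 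Subtracting the two identities gives \eqref{Eq: new  rep expected GE}.

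The only subtlety, and the one step I would write out carefully, is the justification that $\mathbb{E}_{P_{W,S}}[L_P(W,P_S)]$ factors as $\mathbb{E}_{P_W}\otimes \mathbb{E}_{P_S}$ applied to $L_E$, which relies on $L_P(W,P_S)$ being an expectation over a variable distributionally identical to but probabilistically independent of the $S$ used in drawing $W$. Everything else is linearity of expectation plus the two definitional identities for $\overline{P}_{\overline{Z}}$ and $\overline{P}_{W,\overline{Z}}$, so I do not anticipate any real obstacle.
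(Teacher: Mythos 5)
Your proposal is correct and follows essentially the same route as the paper's proof: expand both risk terms by linearity of expectation into the sums $\frac{1}{n}\sum_{i=1}^n \mathbb{E}_{P_W\otimes P_{Z_i}}[\ell(W,Z)]$ and $\frac{1}{n}\sum_{i=1}^n \mathbb{E}_{P_{W,Z_i}}[\ell(W,Z)]$, then recognize the averaged marginal and joint distributions from their definitions. The paper presents this in a single three-line display without spelling out the factorization of the population-risk term, so your extra care on that point is a harmless elaboration rather than a departure.
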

\begin{proof}
By the definition of generalization error, we have
\begin{align}\nn
&\overline{\text{gen}}(P_{W|S},P_S)=\mathbb{E}_{P_{W,S}}[ L_P(W,P_S)-L_E(W,S)]\\
&= \frac{1}{n}\sum_{i=1}^n \mathbb{E}_{P_{W}\otimes P_{Z_i}}[\ell(W,Z)]-\frac{1}{n}\sum_{i=1}^n \mathbb{E}_{P_{W,Z_i}}[\ell(W,Z)]\\\label{Eq: linear exp}
&= \mathbb{E}_{P_{W}\otimes \frac{1}{n}\sum_{i=1}^n P_{Z_i}}[\ell(W,Z)]- \mathbb{E}_{\frac{1}{n}\sum_{i=1}^n P_{W,Z_i}}[\ell(W,Z)],\nn
\end{align}
where the last line follows by the linearity of expectation.
\end{proof}
This characterization embodied in Proposition~\ref{Prop: new exp gen} leads directly to various generalization error bounds in terms of different information measures.

\subsection{Wasserstein Distance-based Upper Bound}
In the following theorem, we provide a generalization error upper bound based on Wasserstein distance using~\eqref{Eq: new  rep expected GE} under Lipschitz condition.
\begin{theorem}\label{Theorem: Wasserine distance}
Suppose that for all $z\in \mathcal{Z}$, the loss function $\ell(\cdot,z)$ is L-Lipschitz, and we have i.i.d. training samples $S = \{Z_i\}_{i=1}^n$. Then, we have the following upper bound
\begin{align}
    |\overline{\text{gen}}(P_{W|S},P_S)|\leq L \mathbb{E}_{P_{{Z}}}[\mathbb{W}(\overline{P}_{W|\overline{Z}},P_{W})].
\end{align}
\end{theorem}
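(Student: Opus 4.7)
The plan is to start from the representation of the expected generalization error provided by Proposition~\ref{Prop: new exp gen} and to bound the resulting integral pointwise in $z$ using the Kantorovich--Rubinstein dual form of the Wasserstein distance given in~\eqref{Eq: KR rep}. Under the i.i.d.\ assumption we have $\overline{P}_{\overline{Z}} = P_Z$, so Proposition~\ref{Prop: new exp gen} reduces to
\begin{equation*}
\overline{\text{gen}}(P_{W|S},P_S)=\mathbb{E}_{P_W \otimes P_{Z}}[\ell(W,Z)]-\mathbb{E}_{ \overline{P}_{W,\overline{Z}}}[\ell(W,Z)].
\end{equation*}

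Next, I would disintegrate both measures over $z$. The second term factors as $\overline{P}_{W,\overline{Z}}(dw,dz) = \overline{P}_{W|\overline{Z}=z}(dw)\, P_Z(dz)$ by the definitions~\eqref{Eq: average joint distribution} and~\eqref{Eq: sym learning algorithm}. For the first term, I simply write $P_W \otimes P_Z(dw,dz) = P_W(dw)\, P_Z(dz)$. Pulling out the common $P_Z(dz)$ gives
\begin{equation*}
\overline{\text{gen}}(P_{W|S},P_S)=\int_{\mathcal{Z}} \Big(\mathbb{E}_{P_W}[\ell(W,z)]-\mathbb{E}_{\overline{P}_{W|\overline{Z}=z}}[\ell(W,z)]\Big)\, P_Z(dz).
\end{equation*}

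For each fixed $z$ the integrand is a difference of expectations of the function $w \mapsto \ell(w,z)$, which is $L$-Lipschitz by hypothesis. Applying the Kantorovich--Rubinstein duality~\eqref{Eq: KR rep} with the test function $g(w) = \ell(w,z)/L$ (or its negative) yields
\begin{equation*}
\big|\mathbb{E}_{P_W}[\ell(W,z)]-\mathbb{E}_{\overline{P}_{W|\overline{Z}=z}}[\ell(W,z)]\big| \leq L\, \mathbb{W}\big(\overline{P}_{W|\overline{Z}=z},\,P_W\big).
\end{equation*}
Finally, taking absolute values outside the integral over $P_Z$ via the triangle inequality and applying the pointwise bound gives the claimed inequality.

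The routine calculations are straightforward; the only delicate point is confirming that the marginals line up correctly under the i.i.d.\ assumption. Specifically, one should check that $\int \overline{P}_{W|\overline{Z}=z}(dw)\, P_Z(dz)$ is indeed the marginal law $P_W$ used on the left-hand expectation, which follows because $\int P_{W|Z_i=z}(dw)\, P_Z(dz) = P_W(dw)$ when $Z_i \sim P_Z$, and averaging over $i$ preserves this. This ensures the pairing of measures in the Wasserstein bound is well-defined and the Kantorovich--Rubinstein duality applies coherently for each $z$.
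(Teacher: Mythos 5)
Your proof is correct and follows essentially the same route as the paper's: both start from the representation in Proposition~\ref{Prop: new exp gen}, use the i.i.d.\ assumption to write $\overline{P}_{W,\overline{Z}} = \overline{P}_{W|\overline{Z}} \otimes P_Z$, and apply the Kantorovich--Rubinstein duality pointwise in $z$ with the $L$-Lipschitz test function $\ell(\cdot,z)/L$. Your additional check that the disintegrated conditionals integrate back to the marginal $P_W$ is a sound (if implicit in the paper) verification, not a deviation.
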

\begin{proof}
We have $\overline{P}_{\overline{Z}}=P_Z$ for i.i.d. data samples, then $\overline{P}_{W,\overline{Z}} = \overline{P}_{W|\overline{Z}} \otimes P_Z$. By ~\eqref{Eq: new  rep expected GE}, we have
\begin{align}\label{Eq: cond expected GE}\nn
|\overline{\text{gen}}(P_{W|S},P_S)|&=|\mathbb{E}_{P_Z}[\mathbb{E}_{P_W}[\ell(W,Z)]-\mathbb{E}_{ \overline{P}_{W|\overline{Z}}}[\ell(W,Z)]]|\\
&\leq L\mathbb{E}_{P_Z}[\mathbb{W}(\overline{P}_{W|\overline{Z}},P_{W})] ,
\end{align}
where the last inequality follows from Kantorovich-Rubinstein duality~\eqref{Eq: KR rep}.
\end{proof}
In the following, we show that our upper bound in Theorem~\ref{Theorem: Wasserine distance} would be tighter than the individual sample Wasserstein distance upper bound in~\citep{galvez2021tighter}.

\begin{proposition}\label{Prop: wassertine based on joint dis}
Under the same assumption as in Theorem~\ref{Theorem: Wasserine distance}, the upper bound in Theorem~\ref{Theorem: Wasserine distance} is always no worse  than the upper bound in \citep[Theorem~1]{galvez2021tighter}, i.e.,
\begin{align}
    |\overline{\text{gen}}(P_{W|S},P_S)|&\leq L \mathbb{E}_{P_Z}[\mathbb{W}(\overline{P}_{W|\overline{Z}},P_{W})] \nn \\
    &\leq \frac{L}{n} \sum_{i=1}^n\mathbb{E}_{P_Z}[\mathbb{W}(P_{W|Z_i},P_{W})].
\end{align}
\end{proposition}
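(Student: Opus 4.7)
The first inequality in the chain is exactly the statement of Theorem~\ref{Theorem: Wasserine distance}, so the whole argument reduces to establishing the second inequality. The plan is to exploit convexity of the Wasserstein distance together with the mixture structure that defines $\overline{P}_{W|\overline{Z}}$ via~\eqref{Eq: sym learning algorithm}.

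First, I would fix an arbitrary $z\in\mathcal{Z}$ and use~\eqref{Eq: sym learning algorithm} to write
\[
\overline{P}_{W|\overline{Z}=z} \;=\; \frac{1}{n}\sum_{i=1}^n P_{W|Z_i=z},
\]
so that the left argument of the Wasserstein distance in Theorem~\ref{Theorem: Wasserine distance} is literally a uniform mixture of the $n$ individual conditionals. Next I would invoke the fact that $P\mapsto \mathbb{W}(P,Q)$ is convex for every fixed $Q$; this is a standard property and is immediate from the Kantorovich-Rubinstein representation~\eqref{Eq: KR rep}, since that representation expresses $\mathbb{W}(P,P_W)$ as a supremum of $P$-linear functionals, and any supremum of linear functionals is convex. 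Applying this convexity to the above mixture with $Q=P_W$ gives, pointwise in $z$,
\[
\mathbb{W}\!\left(\overline{P}_{W|\overline{Z}=z},\,P_W\right) \;\le\; \frac{1}{n}\sum_{i=1}^n \mathbb{W}\!\left(P_{W|Z_i=z},\,P_W\right).
\]

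Finally, I would take expectation of both sides with respect to $P_Z$ (valid under the i.i.d.\ assumption, which, as used in the proof of Theorem~\ref{Theorem: Wasserine distance}, yields $\overline{P}_{\overline{Z}}=P_Z$), multiply by $L$, and chain the resulting bound with the inequality already supplied by Theorem~\ref{Theorem: Wasserine distance}. This delivers the two-sided inequality in the statement. There is no real obstacle in this proof; the only care required is to apply convexity pointwise in $z$ \emph{before} averaging over $P_Z$, so that the outer expectation on the right-hand side correctly matches the inner mixture on the left-hand side.
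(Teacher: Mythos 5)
Your proposal is correct and is essentially the paper's own argument: the paper likewise reduces the claim to the second inequality and proves it by writing $\mathbb{W}(\overline{P}_{W|\overline{Z}},P_W)$ via the Kantorovich--Rubinstein duality~\eqref{Eq: KR rep} and pulling the supremum inside the uniform mixture (i.e., convexity of the supremum of linear functionals), exactly as you do pointwise in $z$ before averaging over $P_Z$. No gaps.
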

\begin{proof}
By Kantorovich-Rubinstein duality~\eqref{Eq: KR rep}, we have
\begin{align}
    L\mathbb{W}(\overline{P}_{W|\overline{Z}},P_{W})&=\sup_{g\in \{g:\operatorname{Lip}(g) \leq 1\}}\{\mathbb{E}_{\overline{P}_{W|\overline{Z}}}[g]-\mathbb{E}_{P_{W}}[g] \}\nn\\
    &\leq \frac{1}{n}\sum_{i=1}^n \sup_{g\in \{g:\operatorname{Lip}(g) \leq 1\}}\{\mathbb{E}_{P_{W|Z_i}}[g]-\mathbb{E}_{P_{W}}[g] \}\nn\\
    &=\frac{1}{n} \sum_{i=1}^n\mathbb{W}(P_{W|Z_i},P_{W}),
\end{align}
where the inequality follows from convexity of supremum function.
\end{proof}
\begin{remark}\label{remark}
The upper bound based on average conditional distribution in Theorem~\ref{Theorem: Wasserine distance} will reduce to the individual sample Wasserstein distance based upper bound in \citep[Theorem~1]{galvez2021tighter} when the learning algorithm $P_{W|S}$ is symmetric, i.e., $P_{W|Z_i}=P_{W|Z}$ for all $i$.  
\end{remark}

\subsection{Total Variation Distance-based Upper Bound}
In the following result, we provide a tighter expected generalization error upper bound in terms of total variation distance for bounded loss functions.
\begin{proposition}\label{Prop: total variation}
Suppose that the loss function is bounded, i.e., $\ell \in [a,b]$, and we have i.i.d. training samples $S = \{Z_i\}_{i=1}^n$. Then, the following upper bound holds
 \begin{align}\label{Eq: upper based tv}
     |\overline{\text{gen}}(P_{W|S},P_S)|&\leq (b-a) \mathbb{E}_{P_{Z}}[\mathbb{TV}(\overline{P}_{W|\overline{Z}},P_W)]\nn\\
     &=\mathbb{TV}(\overline{P}_{W,\overline{Z}},P_W \otimes P_Z).
 \end{align}
\end{proposition}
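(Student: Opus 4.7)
The plan is to combine the representation of the expected generalization error from Proposition~\ref{Prop: new exp gen} with the standard inequality that bounds the difference of two expectations of a bounded function by the total variation distance between the two measures.

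First, under the i.i.d.\ assumption we have $\overline{P}_{\overline{Z}} = P_Z$, so Proposition~\ref{Prop: new exp gen} simplifies to
\begin{equation}
\overline{\text{gen}}(P_{W|S},P_S) = \mathbb{E}_{P_W \otimes P_Z}[\ell(W,Z)] - \mathbb{E}_{\overline{P}_{W,\overline{Z}}}[\ell(W,Z)]. \nonumber
\end{equation}
Next, I would invoke the following standard fact: if $f : \mathcal{X} \to [a,b]$ and $P,Q$ are probability measures on $\mathcal{X}$, then $|\mathbb{E}_P[f] - \mathbb{E}_Q[f]| \le (b-a)\,\mathbb{TV}(P,Q)$, where $\mathbb{TV}$ uses the $\tfrac{1}{2}$-normalization of~\eqref{Eq: tv def}. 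This is obtained by the shift trick: for any constant $c$, $\mathbb{E}_P[f] - \mathbb{E}_Q[f] = \int (f-c)\,d(P-Q)$; taking $c = (a+b)/2$ gives $|f-c| \le (b-a)/2$, and the bound follows from the definition of $\mathbb{TV}$. Applying this to the two joint distributions above yields
\begin{equation}
|\overline{\text{gen}}(P_{W|S},P_S)| \le (b-a)\,\mathbb{TV}(\overline{P}_{W,\overline{Z}},\, P_W \otimes P_Z). \nonumber
\end{equation}

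Finally, to recover the conditional form in the statement, I would use that $\overline{P}_{W,\overline{Z}}(w,z) = \overline{P}_{W|\overline{Z}=z}(w)\,P_Z(z)$ (again by i.i.d.) and factor the marginal $P_Z$ out of the integral in the definition of $\mathbb{TV}$:
\begin{equation}
\mathbb{TV}(\overline{P}_{W,\overline{Z}},\, P_W \otimes P_Z) = \tfrac{1}{2} \int P_Z(z) \int |\overline{P}_{W|\overline{Z}=z}(w) - P_W(w)|\, dw\, dz = \mathbb{E}_{P_Z}[\mathbb{TV}(\overline{P}_{W|\overline{Z}},P_W)]. \nonumber
\end{equation}
This chain delivers both the inequality and the equality in~\eqref{Eq: upper based tv}.

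There is no real obstacle here: the proof is essentially a one-line application of the TV variational bound on top of Proposition~\ref{Prop: new exp gen}, with a disintegration step for the equality. The only mild care needed is to keep track of the $(b-a)$ factor consistent with the $\tfrac{1}{2}$ convention used in~\eqref{Eq: tv def}, and to note explicitly that the i.i.d.\ assumption is what allows the replacement $\overline{P}_{\overline{Z}} = P_Z$, which is used both in rewriting the generalization error and in factoring the marginal out of the total variation integral.
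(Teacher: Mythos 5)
Your proof is correct and reaches the same conclusion, but by a slightly different (and more self-contained) route than the paper. The paper observes that a loss bounded in $[a,b]$ is $(b-a)$-Lipschitz with respect to the discrete metric $\rho(z,z')=\mathbbm{1}\{z\neq z'\}$, under which Wasserstein distance coincides with total variation, and then simply invokes Theorem~\ref{Theorem: Wasserine distance}; the equality is stated as following from the i.i.d.\ assumption and the definition~\eqref{Eq: tv def}. You instead bypass the Wasserstein machinery entirely: you apply Proposition~\ref{Prop: new exp gen} and prove the needed inequality $|\mathbb{E}_P[f]-\mathbb{E}_Q[f]|\leq (b-a)\,\mathbb{TV}(P,Q)$ from scratch via the shift trick, obtaining the joint-distribution form of the bound first and then recovering the conditional form by disintegration. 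The two arguments rest on the same underlying fact (the variational characterization of total variation is exactly Kantorovich--Rubinstein duality for the discrete metric), so neither buys additional generality, but your version is elementary and makes the $(b-a)$ bookkeeping explicit. Indeed, your care on that point exposes a small inconsistency in the paper's displayed equation~\eqref{Eq: upper based tv}: as written, the second line omits the factor $(b-a)$, whereas the equality that actually holds (and that the paper's own proof establishes) is $\mathbb{E}_{P_Z}[\mathbb{TV}(\overline{P}_{W|\overline{Z}},P_W)]=\mathbb{TV}(\overline{P}_{W,\overline{Z}},P_W\otimes P_Z)$, so the chain should carry $(b-a)$ through both lines exactly as your derivation does.
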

\begin{proof}
The bounded condition implies that the loss function $\ell(\cdot,z)$ is $(b-a)$-Lipschitz for all $z\in \mathcal{Z}$. Recall that total variation distance is a special case of Wasserstein distance with $\rho(z,z')=\mathbbm{1}\{z\neq z'\}$, then the inequality can be proved by applying Theorem~\ref{Theorem: Wasserine distance} directly. 

By the assumption of i.i.d. training samples and the definition of total variation in~\eqref{Eq: tv def}, we have
\begin{equation}
    \mathbb{E}_{P_{Z}}[\mathbb{TV}(\overline{P}_{W|\overline{Z}},P_W)]=\mathbb{TV}(\overline{P}_{W,\overline{Z}},P_W \otimes P_Z),
\end{equation}
which completes the proof for the equality.
\end{proof}

Next, we compare our upper bound in terms of total variation distance with the individual sample total variation distance based upper bound in \citep[Corollary 1]{galvez2021tighter}.
\begin{corollary}\label{Cor: compare TV}
Under the same assumptions as in Proposition~\ref{Prop: total variation}, the  upper bound in Proposition~\ref{Prop: total variation}
is always no worse than the individual sample total variation distance bound in \citep[Corollary 1]{galvez2021tighter}, i.e., 
\begin{align}\label{Eq: TV1}
     |\overline{\text{gen}}(P_{W|S},P_S)|&\leq (b-a) \mathbb{E}_{P_Z}[\mathbb{TV}(\overline{P}_{W|\overline{Z}},P_W)] \nn \\
     &\leq \frac{(b-a)}{n} \sum_{i=1}^n\mathbb{E}_{P_Z}[\mathbb{TV}(P_{W|Z_i},P_W)].
 \end{align}
\end{corollary}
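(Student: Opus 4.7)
The first inequality in the claim is just Proposition~\ref{Prop: total variation} itself, so nothing new is required there. The real content is the second inequality, namely
\begin{equation*}
\mathbb{TV}\bigl(\overline{P}_{W|\overline{Z}=z},P_W\bigr)\;\leq\;\frac{1}{n}\sum_{i=1}^n \mathbb{TV}\bigl(P_{W|Z_i=z},P_W\bigr)
\end{equation*}
for every $z$, after which one takes expectation with respect to $P_Z$ and multiplies by $(b-a)$.

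My plan is to follow the same route used in Proposition~\ref{Prop: wassertine based on joint dis}, simply specialized from the $1$-Lipschitz metric to the discrete metric $\rho(z,z')=\mathbbm{1}\{z\neq z'\}$. Concretely, since $\mathbb{TV}(P,Q)=\mathbb{W}(P,Q)$ under that metric, Theorem~\ref{Theorem: Wasserine distance} already delivers the left inequality and the proof of Proposition~\ref{Prop: wassertine based on joint dis} shows that the supremum representation of Wasserstein (equivalently, the variational representation of total variation $\mathbb{TV}(P,Q)=\sup_{\|g\|_\infty\leq 1/2}\{\mathbb{E}_P[g]-\mathbb{E}_Q[g]\}$) combined with the convexity (sub-additivity) of $\sup$ yields the desired averaging inequality for any convex combination of conditionals. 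Alternatively, one can give a one-line direct argument from the definition~\eqref{Eq: tv def}: by the triangle inequality for the $L^1$-norm,
\begin{equation*}
\mathbb{TV}\Bigl(\tfrac{1}{n}\sum_{i=1}^n P_{W|Z_i=z},P_W\Bigr)=\tfrac{1}{2}\int\Bigl|\tfrac{1}{n}\sum_{i=1}^n(dP_{W|Z_i=z}-dP_W)\Bigr|\leq \tfrac{1}{n}\sum_{i=1}^n \tfrac{1}{2}\int|dP_{W|Z_i=z}-dP_W|,
\end{equation*}
which is exactly the required pointwise bound.

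After establishing the pointwise inequality, the final step is to take $\mathbb{E}_{P_Z}$ on both sides (using that $P_Z=\overline{P}_{\overline{Z}}$ in the i.i.d.\ setting, which is where joint convexity enters to allow us to average over $z$), and then chain with Proposition~\ref{Prop: total variation}. There is essentially no obstacle here: the proof is a direct corollary of the joint convexity of total variation distance, and I would expect it to be presented in two or three lines mirroring Proposition~\ref{Prop: wassertine based on joint dis}. The only thing to be careful about is to keep the order of expectations straight and to note, as in Remark~\ref{remark}, that equality between the two bounds holds exactly when the algorithm is symmetric, so the inequality is generally strict and genuinely tightens the bound of \citep[Corollary~1]{galvez2021tighter}.
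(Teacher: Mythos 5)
Your proposal is correct and follows essentially the same route as the paper, which simply invokes the (joint) convexity of total variation as an $f$-divergence and applies it in \eqref{Eq: TV1}; your pointwise triangle-inequality computation from \eqref{Eq: tv def} is just a self-contained verification of the convexity in the first argument that the paper cites, and the subsequent $\mathbb{E}_{P_Z}$ step is plain monotonicity of expectation (your aside that this is "where joint convexity enters" is a harmless misattribution, not a gap).
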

\begin{proof}
As the total variation is an $f$-divergence, it has the joint convexity property with respect to its input~\citep{polyanskiy2014lecture}. Thus, the result follows by applying the convexity of the total variation distance in \eqref{Eq: TV1}.
\end{proof}

\begin{remark}
Under the same assumptions as in Proposition~\ref{Prop: total variation}, it is shown in~\citep[Corollary~1]{galvez2021tighter} that the upper bound based on individual sample total variation distance is tighter than the Individual sample mutual information (ISMI)~\citep{bu2020tightening}. Therefore, our upper bound in Proposition~\ref{Prop: wassertine based on joint dis} and Corollary~\ref{Cor: compare TV} would also be tighter than the ISMI bound.
\end{remark}

The proposed bound in Proposition~\ref{Prop: total variation} will reduce to the individual sample total variation distance-based bound in~\citep[Corollary~1]{galvez2021tighter}, when the learning algorithm is symmetric. However, we may want to use \emph{non-symmetric} learning algorithm in practice since the importance of each training sample is not the same, e.g.,  imbalanced classification or learning under noisy data samples. As we will show in Section~\ref{sec: numerical}, for a non-symmetric learning algorithm, our proposed upper bound will be strictly tighter than the bound in~\citep[Corollary~1]{galvez2021tighter}.


\subsection{KL Divergence-based Upper Bound}
In the following theorem, we provide an upper bound in terms of KL divergence using~\eqref{Eq: new rep expected GE} under sub-Gaussian condition.
\begin{theorem}\label{Theorem: KL result}
Suppose that the loss function $\ell(w,z)$ is $\sigma$-sub-Gaussian\footnote{A random variable $X$ is $\sigma$-sub-Gaussian if $E[e^{\lambda(X-E[X])}]\leq e^{\frac{\lambda^2 \sigma^2}{2}}$ for all $\lambda \in \mathbb{R}$.} under distribution $ P_W \otimes \overline{P}_{\overline{Z}}$. The following upper bound holds on the expected generalization error
\begin{align}
    \overline{\text{gen}}(P_{W|S},P_S)\leq \sqrt{2\sigma^2 D( \overline{P}_{W,\overline{Z}}\|P_W \otimes \overline{P}_{\overline{Z}})}.
\end{align}
\end{theorem}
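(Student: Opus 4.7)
The plan is to combine the new expected-generalization-error identity of Proposition~\ref{Prop: new exp gen} with the classical Donsker--Varadhan argument used by Xu and Raginsky for their mutual-information bound, but applied to the pairing of the \emph{average} joint distribution against the product of its marginals rather than to $P_{W,S}$ versus $P_W\otimes P_S$.

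First I would invoke Proposition~\ref{Prop: new exp gen} to write
\[
\overline{\text{gen}}(P_{W|S},P_S)=\mathbb{E}_{P_W \otimes \overline{P}_{\overline{Z}}}[\ell(W,Z)]-\mathbb{E}_{\overline{P}_{W,\overline{Z}}}[\ell(W,Z)],
\]
so that the quantity to be bounded is the difference of expectations of $\ell(W,Z)$ under the product measure $Q:=P_W \otimes \overline{P}_{\overline{Z}}$ and the joint measure $P:=\overline{P}_{W,\overline{Z}}$. Note that it is precisely this difference that the sub-Gaussian hypothesis of the theorem is phrased with respect to (the baseline measure is $Q$).

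Next I would instantiate the Donsker--Varadhan representation~\eqref{eq: KL rep} with the test function $g(w,z)=-\lambda\bigl(\ell(w,z)-\mathbb{E}_Q[\ell(W,Z)]\bigr)$ for an arbitrary $\lambda>0$, obtaining
\[
D(P\|Q)\;\ge\;\lambda\bigl(\mathbb{E}_Q[\ell(W,Z)]-\mathbb{E}_P[\ell(W,Z)]\bigr)\;-\;\log\mathbb{E}_Q\!\left[e^{-\lambda(\ell(W,Z)-\mathbb{E}_Q[\ell(W,Z)])}\right].
\]
The $\sigma$-sub-Gaussianity of $\ell(W,Z)$ under $Q$ controls the log-MGF term by $\lambda^2\sigma^2/2$, and (using the identity from Proposition~\ref{Prop: new exp gen}) the displayed inequality rearranges to
\[
\lambda\,\overline{\text{gen}}(P_{W|S},P_S)\;-\;\tfrac{\lambda^2\sigma^2}{2}\;\le\;D\!\left(\overline{P}_{W,\overline{Z}}\,\big\|\,P_W\otimes\overline{P}_{\overline{Z}}\right).
\]

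Finally I would optimize the resulting bound $\overline{\text{gen}}\le D/\lambda + \lambda\sigma^2/2$ over $\lambda>0$; the minimizer $\lambda^{\star}=\sqrt{2D/\sigma^{2}}$ delivers exactly $\sqrt{2\sigma^{2}\,D(\overline{P}_{W,\overline{Z}}\|P_W\otimes\overline{P}_{\overline{Z}})}$, which is the claimed inequality. No part of the argument looks technically demanding: the only point requiring a line of justification is that the chosen $g$ lies in the admissible class $\mathcal{G}$ of~\eqref{eq: KL rep}, and this is precisely what the sub-Gaussian hypothesis guarantees. The entire proof is therefore the standard Xu--Raginsky template, with the novelty sitting in the identity supplied by Proposition~\ref{Prop: new exp gen} that lets us apply it to the average joint distribution.
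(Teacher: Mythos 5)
Your proposal is correct and follows exactly the route the paper indicates in its proof sketch: apply Proposition~\ref{Prop: new exp gen} to rewrite the generalization error as a difference of expectations under $\overline{P}_{W,\overline{Z}}$ and $P_W\otimes\overline{P}_{\overline{Z}}$, then run the Donsker--Varadhan/sub-Gaussian argument of Xu and Raginsky on this pair of measures and optimize over $\lambda$. The only difference is that you have written out the details the paper leaves implicit; the argument and its key steps are the same.
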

\begin{proof}[\textbf{Sketch of Proof}]
Applying the Donsker-Varadhan representation of KL divergence~\eqref{eq: KL rep} to the generalization error expressed in~\eqref{Eq: new  rep expected GE} and using the $\sigma$-sub-Gaussianity in a similar approach to \citep[Lemma~1]{xu2017information}, it completes the proof.
\end{proof}
In the following, we compare our KL divergence based upper bound  with  the  mutual information based bound in \citep[Theorem~1]{xu2017information}.
\begin{corollary}\label{Cor: ind sample mutual}
Under the same assumption as in Theorem~\ref{Theorem: KL result}, and further assume that training samples $S = \{Z_i\}_{i=1}^n$ are  i.i.d., the upper bound in Theorem \ref{Theorem: KL result}
is no worse than the mutual information-based upper bound in \citep[Theorem~1]{xu2017information}, i.e., 
 \begin{align}
    \overline{\text{gen}}(P_{W|S},P_S)
    &\leq \sqrt{2\sigma^2 D( \overline{P}_{W,\overline{Z}}\|P_W \otimes \overline{P}_{\overline{Z}})} \nn \\
    &\leq \sqrt{\frac{2\sigma^2}{n}  I(W;S)}.
\end{align}
\end{corollary}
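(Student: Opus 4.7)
The plan is to reduce everything to the single divergence inequality
$$D(\overline{P}_{W,\overline{Z}} \| P_W \otimes \overline{P}_{\overline{Z}}) \leq \tfrac{1}{n}\, I(W;S),$$
after which Corollary~\ref{Cor: ind sample mutual} follows at once by applying Theorem~\ref{Theorem: KL result} and monotonicity of $t\mapsto\sqrt{2\sigma^2 t}$.

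First I would exploit joint convexity of the KL divergence. The key observation is that both arguments decompose as averages indexed by $i$: by definition $\overline{P}_{W,\overline{Z}} = \tfrac{1}{n}\sum_{i=1}^n P_{W,Z_i}$, and by linearity of the product in its second factor $P_W \otimes \overline{P}_{\overline{Z}} = \tfrac{1}{n}\sum_{i=1}^n (P_W \otimes P_{Z_i})$. Joint convexity then yields
$$D(\overline{P}_{W,\overline{Z}} \| P_W \otimes \overline{P}_{\overline{Z}}) \leq \tfrac{1}{n}\sum_{i=1}^n D(P_{W,Z_i} \| P_W \otimes P_{Z_i}) = \tfrac{1}{n}\sum_{i=1}^n I(W;Z_i),$$
and notably this step does not require the i.i.d.\ assumption.

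Next I would invoke the i.i.d.\ hypothesis to compare $\tfrac{1}{n}\sum_{i=1}^n I(W;Z_i)$ with $\tfrac{1}{n} I(W;S)$. By the chain rule, $I(W;S) = \sum_{i=1}^n I(W; Z_i \mid Z_1^{i-1})$. Mutual independence of the $Z_j$'s gives $H(Z_i \mid Z_1^{i-1}) = H(Z_i)$, while the standard fact that conditioning reduces entropy gives $H(Z_i \mid W, Z_1^{i-1}) \leq H(Z_i \mid W)$; subtracting yields $I(W; Z_i \mid Z_1^{i-1}) \geq I(W;Z_i)$ term by term. Summing over $i$ produces $\sum_{i=1}^n I(W;Z_i) \leq I(W;S)$, and combining with the convexity bound gives the target divergence inequality.

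I do not expect any real obstacle, as the argument is essentially bookkeeping. The only point worth flagging is that writing $P_W \otimes \overline{P}_{\overline{Z}}$ itself as an average $\tfrac{1}{n}\sum_i P_W \otimes P_{Z_i}$ is what allows joint convexity to produce $I(W;Z_i)$ on the right-hand side; merely applying convexity in the first argument (with the fixed reference $P_W \otimes \overline{P}_{\overline{Z}}$) would leave terms that do not directly chain into $I(W;S)$ without additionally invoking i.i.d.\ at that earlier stage.
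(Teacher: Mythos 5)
Your proof is correct and follows essentially the same route as the paper: convexity of KL divergence to bound $D(\overline{P}_{W,\overline{Z}}\|P_W\otimes\overline{P}_{\overline{Z}})$ by $\tfrac{1}{n}\sum_{i=1}^n I(W;Z_i)$, then the chain rule plus independence to get $\sum_{i=1}^n I(W;Z_i)\leq I(W;S)$ (which the paper simply cites from Bu et al., Proposition~2, rather than re-deriving). The only cosmetic difference is that you invoke joint convexity with both arguments written as averages, whereas the paper first uses i.i.d.\ to fix the reference measure at $P_W\otimes P_Z$ and applies convexity in the first argument only; under the i.i.d.\ hypothesis these are the same inequality.
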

\begin{proof} 
Under i.i.d assumption, $P_Z=P_Z$. Then, we have
  \begin{align}
    \overline{\text{gen}}(P_{W|S},P_S)&\leq \sqrt{2\sigma^2 D( \overline{P}_{W,\overline{Z}}\|P_W \otimes P_Z)}\\
    &\leq \sqrt{ \frac{2\sigma^2}{n}\sum_{i=1}^n D(P_{W,Z_i}\|P_W \otimes P_Z)}\\
    &= \sqrt{ \frac{2\sigma^2}{n}\sum_{i=1}^n I(W;Z_i)}\\
    &\leq \sqrt{\frac{2\sigma^2}{n}  I(W;S)},
\end{align}
where the second inequality follows from the convexity of KL divergence, and the last inequality is due to the chain rule of mutual information and the i.i.d assumption~\citep[Proposition~2]{bu2020tightening}. 
\end{proof}
\begin{remark}
Under the same assumption as in Theorem~\ref{Theorem: KL result}, our upper bound in Theorem~\ref{Theorem: KL result} will reduce to the ISMI bound proposed in~\citep[Proposition~1]{bu2020tightening}, when the learning algorithm $P_{W|S}$ is symmetric.
\end{remark}

We can also provide the following generalization error upper bound in terms of the reversed KL divergence using the average joint distribution as in \eqref{Eq: new  rep expected GE}.
\begin{proposition}\label{Prop: Lautum information}
Suppose that the loss function $\ell(w,z)$ is $\sigma$-sub-Gaussian under $  \overline{P}_{W,\overline{Z}}$ distribution. Then, the following upper bound holds
\begin{align}
    \overline{\text{gen}}(P_{W|S},P_S)\leq \sqrt{2\sigma^2 D(P_W \otimes \overline{P}_{\overline{Z}} \|  \overline{P}_{W,\overline{Z}})}.
\end{align}
\end{proposition}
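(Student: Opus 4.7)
The plan is to mirror the proof of Theorem \ref{Theorem: KL result}, but to swap the roles of the two distributions inside the Donsker--Varadhan variational formula \eqref{eq: KL rep} so that the sub-Gaussianity hypothesis is applied to the measure that now plays the role of the reference distribution $Q$. Starting from Proposition \ref{Prop: new exp gen}, I would first rewrite the generalization error as
$$\overline{\text{gen}}(P_{W|S},P_S) = \mathbb{E}_{P_W \otimes \overline{P}_{\overline{Z}}}[\ell(W,Z)] - \mathbb{E}_{\overline{P}_{W,\overline{Z}}}[\ell(W,Z)].$$

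Next, I would apply \eqref{eq: KL rep} with $P = P_W \otimes \overline{P}_{\overline{Z}}$ and $Q = \overline{P}_{W,\overline{Z}}$, using as test function $g(w,z) = \lambda\,\ell(w,z)$ for an arbitrary $\lambda > 0$. This yields the lower bound
$$D(P_W \otimes \overline{P}_{\overline{Z}} \,\|\, \overline{P}_{W,\overline{Z}}) \geq \lambda\, \mathbb{E}_{P_W \otimes \overline{P}_{\overline{Z}}}[\ell(W,Z)] - \log \mathbb{E}_{\overline{P}_{W,\overline{Z}}}[e^{\lambda \ell(W,Z)}].$$
Invoking the hypothesis that $\ell(W,Z)$ is $\sigma$-sub-Gaussian under $\overline{P}_{W,\overline{Z}}$ gives $\log \mathbb{E}_{\overline{P}_{W,\overline{Z}}}[e^{\lambda \ell(W,Z)}] \leq \lambda\, \mathbb{E}_{\overline{P}_{W,\overline{Z}}}[\ell(W,Z)] + \tfrac{\lambda^2 \sigma^2}{2}$, and combining this with the previous two displays collapses the two expectations of $\ell$ into the generalization error, producing the scalar quadratic inequality
$$\lambda\, \overline{\text{gen}}(P_{W|S},P_S) - \tfrac{\lambda^2 \sigma^2}{2} \leq D(P_W \otimes \overline{P}_{\overline{Z}} \,\|\, \overline{P}_{W,\overline{Z}}),$$
valid for every $\lambda > 0$. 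Optimizing the left-hand side over $\lambda$ (the optimal choice being $\lambda^\star = \sqrt{2 D/\sigma^2}$) then delivers the advertised bound.

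The main point of care is conceptual rather than computational: one must carefully track which distribution plays the role of $P$ and which plays the role of $Q$ in \eqref{eq: KL rep}. Because the sub-Gaussianity hypothesis is now imposed on $\overline{P}_{W,\overline{Z}}$ rather than on $P_W \otimes \overline{P}_{\overline{Z}}$, the log moment-generating function that must be controlled is the one under $\overline{P}_{W,\overline{Z}}$, which forces $\overline{P}_{W,\overline{Z}}$ into the $Q$-slot and therefore produces the reversed KL divergence $D(P_W \otimes \overline{P}_{\overline{Z}} \,\|\, \overline{P}_{W,\overline{Z}})$ — the Lautum-information analogue of the mutual-information bound in Theorem \ref{Theorem: KL result}, which by joint convexity of the KL divergence further specializes (under the i.i.d.\ assumption) to $\sqrt{\tfrac{2\sigma^2}{n}\sum_{i=1}^n L(W;Z_i)}$.
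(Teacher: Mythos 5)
Your proposal is correct and is exactly the argument the paper intends: the paper states Proposition~\ref{Prop: Lautum information} without proof as the reversed-KL analogue of Theorem~\ref{Theorem: KL result}, whose sketch is precisely the Donsker--Varadhan-plus-sub-Gaussianity route you follow, with the roles of $P$ and $Q$ swapped so that the moment-generating function is controlled under $\overline{P}_{W,\overline{Z}}$. Your identification of which measure must sit in the $Q$-slot, and the resulting quadratic-in-$\lambda$ optimization, are both accurate.
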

Similar to Corollary~\ref{Cor: ind sample mutual}, we have the following result.
\begin{corollary}
Under the same assumption as in Proposition~\ref{Prop: Lautum information},  the upper bound in Proposition \ref{Prop: Lautum information}
is always no worse than the upper bound based on individual sample Lautum Information,
 \begin{align}
    \overline{\text{gen}}(P_{W|S},P_S)&\leq \sqrt{2\sigma^2 D(P_W \otimes P_Z \|  \overline{P}_{W,\overline{Z}})} \nn  \\
    &\leq \sqrt{\frac{2\sigma^2}{n} \sum_{i=1}^n L(W;Z_i)}.
\end{align}
\end{corollary}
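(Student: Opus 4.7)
The plan is to follow the exact same template used for Corollary~\ref{Cor: ind sample mutual}, but with the arguments of the KL divergence swapped so that Lautum information appears in the right-hand side. The i.i.d.\ assumption should be invoked at the start so that $\overline{P}_{\overline{Z}}=P_Z$, which turns the bound of Proposition~\ref{Prop: Lautum information} into $\sqrt{2\sigma^2 D(P_W\otimes P_Z\|\overline{P}_{W,\overline{Z}})}$; this establishes the first inequality immediately.

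For the second inequality, I would exploit the fact that $\overline{P}_{W,\overline{Z}}=\tfrac{1}{n}\sum_{i=1}^n P_{W,Z_i}$ while the reference product measure can be written trivially as $P_W\otimes P_Z=\tfrac{1}{n}\sum_{i=1}^n P_W\otimes P_Z$. The key tool is the joint convexity of the KL divergence in its pair of arguments (the same property invoked in Corollary~\ref{Cor: compare TV} and Corollary~\ref{Cor: ind sample mutual}, but applied here in the reverse direction). Joint convexity yields
\begin{align}
D\!\left(P_W\otimes P_Z\,\Big\|\,\tfrac{1}{n}\textstyle\sum_{i=1}^n P_{W,Z_i}\right)
&\leq \tfrac{1}{n}\sum_{i=1}^n D\!\left(P_W\otimes P_Z\,\|\,P_{W,Z_i}\right) \nonumber \\
&= \tfrac{1}{n}\sum_{i=1}^n L(W;Z_i), \nonumber
\end{align}
where the last equality is just the definition of Lautum information between $W$ and $Z_i$ (using that $P_{Z_i}=P_Z$ under the i.i.d.\ assumption, so $P_W\otimes P_Z = P_W\otimes P_{Z_i}$). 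Combining this with the monotonicity of the square root closes the chain of inequalities.

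I do not expect any real obstacle: the proof is a direct transposition of Corollary~\ref{Cor: ind sample mutual} with the two arguments of the divergence interchanged, and the only subtle point is that we cannot conclude a further bound by something like $L(W;S)/n$, because the chain-rule argument that worked for mutual information does not have a clean analogue for Lautum information. That is presumably why the statement stops at the sum of individual sample Lautum terms rather than at a joint Lautum quantity. The i.i.d.\ assumption needs to be stated explicitly at the start of the proof, mirroring the remark made inside the proof of Corollary~\ref{Cor: ind sample mutual} that $\overline{P}_{\overline{Z}}=P_Z$.
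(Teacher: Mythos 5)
Your proof is correct and matches the paper's intended argument: the paper gives no explicit proof here, stating only that the result follows ``similar to Corollary~\ref{Cor: ind sample mutual},'' and your use of joint convexity of the KL divergence with the constant decomposition $P_W\otimes P_Z=\tfrac{1}{n}\sum_{i=1}^n P_W\otimes P_{Z_i}$ is exactly that transposition. Your observation that the chain stops at the sum of individual Lautum terms (no clean chain-rule step as in the mutual-information case) is also consistent with how the paper states the result.
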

\subsection{Jensen-Shannon Divergence Based Upper Bound}
We can also apply the average joint distribution approach to the Jensen-Shannon divergence based upper bound in \citep{aminian2020jensen}.
\begin{theorem}\label{Prop: upper jensen}
Suppose that the loss function $\ell(w,z)$ is $\sigma$-sub-Gaussian under distribution $\frac{P_W \otimes \overline{P}_{\overline{Z}} +  \overline{P}_{W,\overline{Z}}}{2}$. The following upper bound holds on the expected generalization error
\begin{align}
    |\overline{\text{gen}}(P_{W|S},P_S)|\leq 2\sqrt{2\sigma^2 D_{JS}( \overline{P}_{W,\overline{Z}}\|P_W \otimes \overline{P}_{\overline{Z}})}.
\end{align}
\end{theorem}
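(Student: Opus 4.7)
The plan is to use the representation of the expected generalization error from Proposition~\ref{Prop: new exp gen},
\[
\overline{\text{gen}}(P_{W|S},P_S)=\mathbb{E}_{P_W \otimes \overline{P}_{\overline{Z}}}[\ell(W,Z)]-\mathbb{E}_{\overline{P}_{W,\overline{Z}}}[\ell(W,Z)],
\]
and to split this difference through the mixture distribution $M \triangleq \tfrac{1}{2}(P_W \otimes \overline{P}_{\overline{Z}} + \overline{P}_{W,\overline{Z}})$ appearing in the definition of the Jensen--Shannon divergence. Writing
\[
\overline{\text{gen}}(P_{W|S},P_S) = \bigl(\mathbb{E}_{P_W \otimes \overline{P}_{\overline{Z}}}[\ell] - \mathbb{E}_M[\ell]\bigr) + \bigl(\mathbb{E}_M[\ell] - \mathbb{E}_{\overline{P}_{W,\overline{Z}}}[\ell]\bigr)
\]
reduces the task to controlling each piece by the KL divergence from the corresponding measure to $M$.

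First I would invoke the standard sub-Gaussian/Donsker--Varadhan argument used in the proof of Theorem~\ref{Theorem: KL result} (and originally in \citep[Lemma~1]{xu2017information}). Since $\ell(W,Z)$ is $\sigma$-sub-Gaussian under $M$, substituting $g = \lambda \ell$ in the variational representation \eqref{eq: KL rep} for KL divergence, then optimizing over $\lambda \in \mathbb{R}$, yields for any probability measure $P$ absolutely continuous with respect to $M$
\[
\bigl|\mathbb{E}_P[\ell] - \mathbb{E}_M[\ell]\bigr| \leq \sqrt{2\sigma^2\, D(P\|M)}.
\]
Applying this with $P = P_W \otimes \overline{P}_{\overline{Z}}$ and with $P = \overline{P}_{W,\overline{Z}}$ gives, via the triangle inequality,
\[
|\overline{\text{gen}}(P_{W|S},P_S)| \leq \sqrt{2\sigma^2\, D(P_W \otimes \overline{P}_{\overline{Z}} \| M)} + \sqrt{2\sigma^2\, D(\overline{P}_{W,\overline{Z}} \| M)}.
\]

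To finish, I would combine the two square roots using the elementary inequality $\sqrt{a} + \sqrt{b} \leq \sqrt{2(a+b)}$ (Cauchy--Schwarz, or concavity of $\sqrt{\cdot}$), obtaining
\[
|\overline{\text{gen}}(P_{W|S},P_S)| \leq \sqrt{4\sigma^2 \bigl(D(P_W \otimes \overline{P}_{\overline{Z}} \| M) + D(\overline{P}_{W,\overline{Z}} \| M)\bigr)}.
\]
By the very definition of the Jensen--Shannon divergence, the sum inside equals $2 D_{JS}(\overline{P}_{W,\overline{Z}} \| P_W \otimes \overline{P}_{\overline{Z}})$, which gives the advertised constant $2\sqrt{2\sigma^2\, D_{JS}(\overline{P}_{W,\overline{Z}}\|P_W \otimes \overline{P}_{\overline{Z}})}$.

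The only non-routine step is the first one: one must check that the sub-Gaussian assumption on $\ell$ under the mixture $M$ is the right hypothesis to control \emph{both} of the one-sided deviations $\mathbb{E}_P[\ell] - \mathbb{E}_M[\ell]$. This is precisely why the statement assumes sub-Gaussianity under the symmetric measure $M$ rather than under either $P_W \otimes \overline{P}_{\overline{Z}}$ or $\overline{P}_{W,\overline{Z}}$ separately; given this, everything else reduces to the KL-based bound already proved for Theorem~\ref{Theorem: KL result} and the algebraic identity relating JS divergence to the two KL divergences to the midpoint.
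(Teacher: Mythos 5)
Your proposal is correct and is exactly the ``auxiliary distribution technique'' the paper invokes in its proof sketch: split the generalization error through the midpoint mixture $M=\tfrac{1}{2}(P_W\otimes\overline{P}_{\overline{Z}}+\overline{P}_{W,\overline{Z}})$, bound each half via Donsker--Varadhan using sub-Gaussianity under $M$, and merge the two square roots so that $D(P_W\otimes\overline{P}_{\overline{Z}}\|M)+D(\overline{P}_{W,\overline{Z}}\|M)=2D_{JS}(\overline{P}_{W,\overline{Z}}\|P_W\otimes\overline{P}_{\overline{Z}})$ produces the constant $2\sqrt{2\sigma^2 D_{JS}}$. Your write-up simply fills in the details the paper leaves to the cited reference, and the constants check out.
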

\begin{proof}[\textbf{Sketch of Proof}]
The theorem can be proved by using the auxiliary distribution technique in \citep{aminian2020jensen} and considering the generalization error representation in terms of average joint distribution in~\eqref{Eq: new  rep expected GE}.
\end{proof}
As discussed in \citep{polyanskiy2014lecture}, Jensen-Shannon is a $f$-divergence and it is a jointly convex function. Thus, we have:
\begin{align}\nn
    |\overline{\text{gen}}(P_{W|S},P_S)|&\leq 2\sqrt{2\sigma^2 D_{JS}( \overline{P}_{W,\overline{Z}}\|P_W \otimes \overline{P}_{\overline{Z}})}\\\label{eq: js per sample}
    &\leq 2\sqrt{\frac{2\sigma^2 }{n} \sum_{i=1}^n D_{JS}(P_{W,Z_i}\|P_W \otimes P_{Z})},
\end{align}
where \eqref{eq: js per sample} is an upper bound based on per sample Jensen-Shannon divergence.
\section{The Difference of Empirical risks}
We now consider a slightly different setting. Suppose one has access two different learning algorithms $A$ and $B$, i.e. $P_{W_A|S}$ and $P_{W_B|S}$. And the goal is to quantify the difference between the empirical risk associated with each of the learning algorithms, i.e.,
\begin{equation}
  \Delta_E(A,B)= \mathbb{E}_{P_{W_A,W_B,S}} [L_E(W_A,S)-L_E(W_B,S)].
\end{equation}
Using the average joint distribution, we can provide an upper bound on the absolute value of the difference between the empirical risks of these algorithms.
\begin{proposition}\label{Prop: diff emp}
Suppose that the loss, $\ell(w,z)$, is $\sigma$-sub-Gaussian under $\overline{P}_{W_B,\overline{Z}}$ distribution. The following upper bound holds on the expected difference between empirical risks of two learning algorithms,
\begin{align}
    \left| \Delta_E(A,B)\right|\leq \sqrt{2\sigma^2 D( \overline{P}_{W_A,\overline{Z}}\| \overline{P}_{W_B,\overline{Z}})}
\end{align}
\end{proposition}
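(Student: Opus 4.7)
The plan is to mirror the argument of Theorem~\ref{Theorem: KL result}, first rewriting $\Delta_E(A,B)$ as the difference of expectations of a single loss under two average joint distributions, and then invoking the Donsker--Varadhan representation together with sub-Gaussianity.

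First I would observe that, by linearity of expectation and the definition of the empirical risk,
\begin{align}
\mathbb{E}_{P_{W_A,S}}[L_E(W_A,S)] &= \frac{1}{n}\sum_{i=1}^n \mathbb{E}_{P_{W_A,Z_i}}[\ell(W_A,Z_i)] = \mathbb{E}_{\overline{P}_{W_A,\overline{Z}}}[\ell(W,Z)], \nn
\end{align}
and similarly for the $B$-algorithm. Since $S$ appears only through the marginals $P_{Z_i}$ in each empirical-risk term, the joint coupling $P_{W_A,W_B,S}$ plays no role beyond its marginals, so
\begin{equation}
\Delta_E(A,B) = \mathbb{E}_{\overline{P}_{W_A,\overline{Z}}}[\ell(W,Z)] - \mathbb{E}_{\overline{P}_{W_B,\overline{Z}}}[\ell(W,Z)]. \nn
\end{equation}
This is the analog of Proposition~\ref{Prop: new exp gen} for this two-algorithm setting and is the only step that is specific to the present problem.

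Next, I would apply the Donsker--Varadhan variational formula~\eqref{eq: KL rep} with $P=\overline{P}_{W_A,\overline{Z}}$, $Q=\overline{P}_{W_B,\overline{Z}}$, and the test function $g(w,z)=\lambda \ell(w,z)$ for $\lambda \in \mathbb{R}$. This yields
\begin{equation}
D(\overline{P}_{W_A,\overline{Z}}\|\overline{P}_{W_B,\overline{Z}}) \geq \lambda\,\Delta_E(A,B) - \log \mathbb{E}_{\overline{P}_{W_B,\overline{Z}}}\!\left[e^{\lambda\bigl(\ell(W,Z)-\mathbb{E}_{\overline{P}_{W_B,\overline{Z}}}[\ell]\bigr)}\right]. \nn
\end{equation}
The $\sigma$-sub-Gaussianity of $\ell$ under $\overline{P}_{W_B,\overline{Z}}$ bounds the log-MGF term by $\lambda^2\sigma^2/2$, so
\begin{equation}
\lambda\,\Delta_E(A,B) - \tfrac{\lambda^2\sigma^2}{2} \leq D(\overline{P}_{W_A,\overline{Z}}\|\overline{P}_{W_B,\overline{Z}}). \nn
\end{equation}
Optimizing the left side over $\lambda \in \mathbb{R}$ (choosing the sign of $\lambda$ to match the sign of $\Delta_E(A,B)$) gives $\Delta_E(A,B)^2/(2\sigma^2)$ on the left, which rearranges to the claimed bound after taking square roots.

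The proof is largely routine once the initial reformulation is in hand; there is no genuine obstacle. The only minor subtlety is that the absolute value in the statement requires running the Donsker--Varadhan inequality with both positive and negative $\lambda$, which is the standard trick used in \citep[Lemma~1]{xu2017information} and reused throughout this paper. Notably, the argument does not require the training samples to be i.i.d., since $\Delta_E(A,B)$ is expressed purely in terms of the average joint distributions induced by the two algorithms.
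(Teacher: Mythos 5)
Your proposal is correct and follows essentially the same route as the paper: rewrite $\Delta_E(A,B)$ as a difference of expectations of $\ell$ under the two average joint distributions, then apply the Donsker--Varadhan representation with the sub-Gaussian bound on the log-moment-generating function, exactly as in \citep[Lemma~1]{xu2017information}. The paper's proof states this in compressed form; you have merely filled in the standard details (the choice $g=\lambda\ell$, optimization over $\lambda$, and the sign trick for the absolute value), so there is nothing substantively different to compare.
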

\begin{proof} $\Delta_E(A,B)$ can be written as 
\begin{align}\nn
   \Delta_E(A,B)&= \mathbb{E}_{P_{W_A,W_B,S}} [L_E(W_A,S)-L_E(W_B,S)]\\\label{Eq: diff emp rep}
   &=\mathbb{E}_{\overline{P}_{W_A,\overline{Z}}}[\ell(W,Z)]-\mathbb{E}_{\overline{P}_{W_B,\overline{Z}}}[\ell(W,Z)].
\end{align}
The final result holds by applying Donsker-Varadhan~\eqref{eq: KL rep} to \eqref{Eq: diff emp rep} and using $\sigma$-sub-Gaussian in a similar way as in \citep[Lemma~1]{xu2017information}.
\end{proof}

In a similar way to Proposition~\ref{Prop: diff emp}, we could provide an upper bound on the difference of two empirical risks achieved using a different number of training samples. Let $W'$ denote the output of the learning algorithm trained with $S^\prime_m$, which contains $m$ samples, and $W$ is learned using $S_n$ with $n$ samples.
\begin{corollary}\label{Cor: erm diff samples}
Suppose that the loss function $\ell(w,z)$ is $\sigma$-sub-Gaussian under distribution $\overline{P}_{W,\overline{Z}}$. We have the following upper bound on the expected difference of empirical risks achieved using different number of training samples 
\begin{align*}
 &\left|\mathbb{E}[L_E(W^\prime,S^\prime_m)-L_E(W,S_n)]\right|  \leq \sqrt{2\sigma^2 D(\overline{P}_{W',\overline{Z}'}\| \overline{P}_{W,\overline{Z}})},
\end{align*}
where the expectation is over the distribution $P_{W^\prime,W,S^\prime_m,S_n}$.
\end{corollary}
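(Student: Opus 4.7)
The plan is to mirror the proof of Proposition~\ref{Prop: diff emp}, extending it to two learning algorithms trained on possibly different sample sizes $m$ and $n$. The first step is to rewrite each expected empirical risk using the corresponding average joint distribution. By definition of the empirical risk and linearity of expectation,
\begin{align*}
\mathbb{E}_{P_{W,S_n}}[L_E(W,S_n)]
&= \frac{1}{n}\sum_{i=1}^n \mathbb{E}_{P_{W,Z_i}}[\ell(W,Z_i)]
= \mathbb{E}_{\overline{P}_{W,\overline{Z}}}[\ell(W,Z)],
\end{align*}
and analogously $\mathbb{E}[L_E(W^\prime,S^\prime_m)] = \mathbb{E}_{\overline{P}_{W',\overline{Z}'}}[\ell(W,Z)]$. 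Substituting these two identities yields
\begin{align*}
\mathbb{E}[L_E(W^\prime,S^\prime_m)-L_E(W,S_n)]
= \mathbb{E}_{\overline{P}_{W',\overline{Z}'}}[\ell(W,Z)] - \mathbb{E}_{\overline{P}_{W,\overline{Z}}}[\ell(W,Z)],
\end{align*}
which is the direct analogue of~\eqref{Eq: diff emp rep}.

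Second, I would apply the Donsker-Varadhan variational representation~\eqref{eq: KL rep} with $P=\overline{P}_{W',\overline{Z}'}$ and $Q=\overline{P}_{W,\overline{Z}}$, using the test function $g(W,Z)=\lambda\,\ell(W,Z)$ with free parameter $\lambda\in\mathbb{R}$. Because $\ell(W,Z)$ is $\sigma$-sub-Gaussian under $\overline{P}_{W,\overline{Z}}$, the corresponding log-moment-generating function is controlled by $\lambda^2\sigma^2/2$, so Donsker-Varadhan yields
\begin{align*}
\lambda\bigl(\mathbb{E}_{\overline{P}_{W',\overline{Z}'}}[\ell]-\mathbb{E}_{\overline{P}_{W,\overline{Z}}}[\ell]\bigr)
\leq D(\overline{P}_{W',\overline{Z}'}\|\overline{P}_{W,\overline{Z}}) + \frac{\lambda^2\sigma^2}{2}.
\end{align*}
Optimizing over $\lambda>0$ gives the one-sided bound; replacing $\ell$ by $-\ell$ (which is sub-Gaussian with the same parameter) handles the opposite sign and yields the absolute-value bound. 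This is exactly the quadratic trick used in~\citep[Lemma~1]{xu2017information}.

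The proof is essentially a direct transcription of Proposition~\ref{Prop: diff emp}, so the only subtle point is matching the direction of the KL divergence with the reference measure for sub-Gaussianity: because sub-Gaussianity is posited only under $\overline{P}_{W,\overline{Z}}$, this distribution must appear as the second argument of $D(\cdot\|\cdot)$, which fixes the order $D(\overline{P}_{W',\overline{Z}'}\|\overline{P}_{W,\overline{Z}})$ in the final bound. No assumption on the relationship between $m$ and $n$ or on the joint law $P_{W^\prime,W,S^\prime_m,S_n}$ is required, because the argument depends on the underlying distributions only through the two average joint distributions.
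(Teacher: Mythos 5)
Your proposal is correct and follows exactly the route the paper intends: the paper proves this corollary only by reference to Proposition~\ref{Prop: diff emp}, whose argument you have transcribed faithfully --- rewriting each expected empirical risk via its average joint distribution and then applying the Donsker--Varadhan representation with the sub-Gaussian moment bound. Your additional remark that the sub-Gaussianity reference measure must sit in the second slot of the KL divergence is a correct and worthwhile clarification of a detail the paper leaves implicit.
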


\section{Numerical Example}\label{sec: numerical}
We illustrate that the proposed bounds can be tighter than existing ones using a simple toy example. The goal of the example is to estimate the mean of a Gaussian random variable $Z \sim \mathcal{N}(\beta,\sigma^2)$ based on two i.i.d. samples $Z_1$ and $Z_2$. We consider the estimate given by $W=tZ_1+(1-t)Z_2$ for $0<t<1$, and adopt the truncated $\ell_2$ loss function $\ell(w,z)=\min((w-z)^2,c^2)$. Since the loss function is bounded within the interval $[0,c^2]$, it is $\frac{c^2}{2}$-sub-Gaussian for all $w$. In the following, we evaluate four generalization error upper bounds based on different information measures: 1)  Individual sample mutual information proposed in~\citep[Proposition~1]{bu2020tightening}, 2) KL divergence using average joint distribution in Theorem~\ref{Theorem: KL result}, 3) individual sample total variation distance in \citep[Corollary 1]{galvez2021tighter}, and 4) total variation using average joint distribution in Proposition~\ref{Prop: total variation}. Thus, we have
\begin{align}\label{Eq: Mutual information Based Bound}
    &\overline{\text{gen}}(P_{W|Z_1,Z_2},P_Z)\leq \frac{c^2}{4}\left(\sqrt{2I(W;Z_1)}+\sqrt{2I(W;Z_2)}\right),\nn \\
   &\overline{\text{gen}}(P_{W|Z_1,Z_2},P_Z)\leq\frac{c^2}{2} \sqrt{2D(\overline{P}_{W,\overline{Z}}\|P_W \otimes P_Z)},\\
   &\overline{\text{gen}}(P_{W|Z_1,Z_2},P_Z)\leq\\\nn&\quad \frac{c^2}{2} \left(\mathbb{TV}(P_{W,Z_1},P_W \otimes P_Z)+\mathbb{TV}(P_{W,Z_2},P_W \otimes P_Z)\right),\\
   &\overline{\text{gen}}(P_{W|Z_1,Z_2},P_Z)\leq c^2 \mathbb{TV}(\overline{P}_{W,\overline{Z}},P_W \otimes P_Z).
\end{align}
It can be shown that $W \sim \mathcal{N}(\beta,\sigma^2(t^2+(1-t)^2))$, and $(W,Z_1)$ and $(W,Z_2)$ are jointly Gaussian with correlation coefficients $\rho_1=\frac{t}{\sqrt{t^2+(1-t)^2}}$ and $\rho_2=\frac{(1-t)}{\sqrt{t^2+(1-t)^2}}$, respectively. Note that
\begin{align}
    D(\overline{P}_{W,\overline{Z}}\|P_W \otimes P_Z)=h(P_W)+h(P_Z)-h(\overline{P}_{W,\overline{Z}}),
\end{align}
with $h (\cdot)$ denoting the differential entropy, i.e., 
\begin{align*}
 &h(P_Z)=\frac{1}{2}\log(2\pi\sigma^2e),\\
  &h(P_W)=\frac{1}{2}\log(2\pi\sigma^2(t^2+(1-t)^2)e),
\end{align*}
 whereas $h(\overline{P}_{w,Z^2})$ can be computed numerically.

\begin{figure}
    \centering
    \includegraphics[scale=0.23]{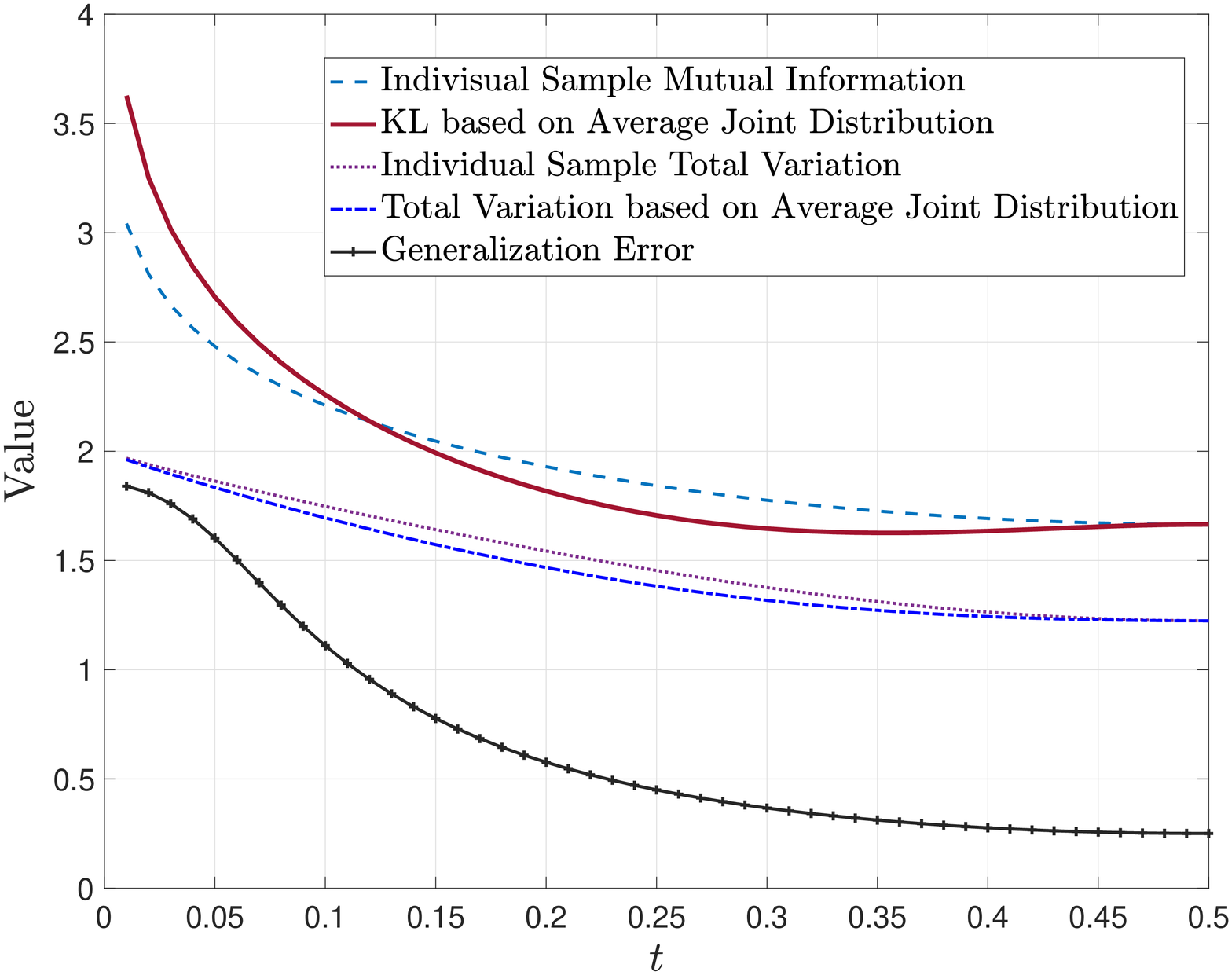}
    \caption{Comparison of the true generalization error and four generalization error upper bounds in Gaussian mean estimation example with $\sigma=10$ and $c =2 $, as we change $t$.}
    \label{fig:bound compare}
\end{figure}
Fig.\ref{fig:bound compare} depicts the four generalization error bounds based on individual sample mutual information, KL divergence using average joint distribution, individual sample total variation distance, total variation distance using average joint distribution, and the true generalization error. It can be seen that for $t>0.1$, the upper bound based on KL divergence using average joint distribution is tighter than the individual sample mutual information-based upper bound. In addition, the total variation using average joint distribution gives the tightest upper bound. At $t=0.5$, the learning algorithm would be symmetric with respect to $Z_1$ and $Z_2$. Therefore, the individual sample mutual information-based upper bound equals KL divergence-based upper bound using average join distribution. Similarly, our total variation distance-based upper bound using average joint distribution is equal to the individual sample total variation distance-based upper bound at $t=0.5$.

\section{Conclusion}
We have introduced a new approach to obtain information-theoretic bounds of the generalization error for supervised learning problems. Our upper bounds based on Wasserstein distance and total variation distance are tighter than counterparts based on individual samples. Our approach could also be combined with PAC-Bayesian upper bounds \citep{van2014pac} and conditional information techniques \citep{steinke2020reasoning} to tighten the result, which is left for future research.
 \bibliographystyle{IEEEtran}
 \bibliography{Refs}

\end{document}